\newcommand{\rknn}{R$k$NN\xspace}
\newcommand{\rstknn}{RST$k$NN\xspace}
\begin{document}

\markboth{}{Efficient Reverse $k$ Nearest Neighbor evaluation for hierarchical index}

\title{Efficient Reverse $k$ Nearest Neighbor evaluation for hierarchical index}
\author{Siddharth Dawar
\affil{Indraprastha Institute of Information Technology, India}
Vikram Goyal
\affil{Indraprastha Institute of Information Technology, India}
Debajyoti Bera
\affil{Indraprastha Institute of Information Technology, India}
}

\begin{abstract}
``Reverse Nearest Neighbor'' query finds applications in decision support
systems, profile-based marketing, emergency services etc.  In this paper,
we point out a few flaws in the branch and bound algorithms proposed
earlier for computing monochromatic \rknn queries over data points stored
in hierarchical index. We give suitable counter examples to validate our claims and propose a
correct algorithm for the corresponding problem.
We show that our algorithm is correct by identifying necessary
conditions behind correctness of algorithms for this problem.
\end{abstract}

\maketitle

\section{Introduction}\label{section:intro}
One important type of operation that is gaining popularity in database and
data-mining research
community is the {\em Reverse Nearest Neighbor Query} (\rknn)
\cite{korn2000influence}. Given a set of database objects $O$ and a query
object $Q$, the \rknn query returns those objects in $O$, for which $Q$
is one of their $k$ nearest neighbors; here the notion of neighborhood is with
respect to an appropriately defined notion of distance between the objects.
A classic example \rknn is in the domain of decision support systems where the task is to open
a new facility (like a restaurant) in an area such that it will be least influenced
by its competitors and attract good business. Another application is profile
based marketing \cite{korn2000influence}, where a company maintains profiles of
its customers and wants to start a new service which can attract the maximum
number of customers.
\rknn has also applications in clustering, where a cluster could be created by
identifying a group of objects, and clustering them around their common nearest
neighbor point -- this essentially involves finding cluster centers
with high cardinality of reverse nearest neighbor sets.
{\em Reciprocal nearest neighborhood}, in which data
points which are nearest neighbors of each other are clustered
together (and therefore, satisfy both nearest neighbor and reverse nearest
neighbor criteria), is another well-known technique in clustering \cite{lopez2012fastrnn}.

This important concept has seen a series of remarkable applications and
algorithms for processing different types of
objects, in various contexts and under variations \cite{kang2007continuous}, \cite{safar2009voronoi}, \cite{tran2009reverse}, \cite{taniar2011spatial},\cite{shang2011finding}, \cite{cheema2012continuous}, \cite{ghaemi2012continuous},\cite{li2013efficient}, \cite{emrich2014reverse}, \cite{cabello2010facility}, \cite{bhattacharya2013new} of the problem parameters.
The focus of this paper is
monochromatic \rknn queries -- in this version, all objects in
the database and the query belong to the same category, unlike the bichromatic
version in which the objects can belong to different categories. Furthermore, we want to
focus on queries where $k$ is specified as part of a query, and want to support
objects from an arbitrary metric space.

This paper points out several fundamental inaccuracies in three papers published
earlier on the problem mentioned above.
\begin{itemize}
    \item Reverse $k$-nearest neighbor search in dynamic and general metric databases \cite{achtert2009reverse}
    \item Reverse spatial and textual $k$ nearest neighbor search \cite{lu2011reverse}
    \item Efficient algorithms and cost models for reverse spatial-keyword k-nearest neighbor search \cite{lu2014efficient}
\end{itemize}

Achtert et al.\cite{achtert2009reverse} proposed a branch-and-bound algorithm
for the above problem which could use any given hierarchical tree-like index
on data from any metric space. Lu et al. \cite{lu2011reverse} proposed a
similar algorithm, but specifically optimized for spatio-textual data, for answering
\rstknn queries using a specialized IUR tree as the
indexing structure. In a followup paper \cite{lu2014efficient}, they  proposed
an improvement of their algorithm (including correcting an error) and a
theoretical cost model to analyze the efficiency of their algorithm. However,
we observed several deficiencies in the algorithms mentioned above. In this
paper we will point out those inaccuracies, and discuss them more formally by
pointing out some key properties which
these algorithms violate, but are necessary for ensuring correctness of
these and other similar algorithms.  We will present detailed counter examples and suggest corrective
modifications to these algorithms.
Finally we will propose a correct algorithm for performing \rknn queries over a
hierarchical index and also present its proof of correctness.

The paper is organized as follows. In Section \ref{section:earlier} we explain
the three published approaches mentioned above in which we found
inaccuracies. In Section \ref{section:counter-ex} we describe our counter-examples with respect to them.
We present our modified algorithm in Section \ref{section:algo}, and its
proof of correctness in Section \ref{section:proof}.

\section{Earlier Results}\label{section:earlier}
The underlying algorithms for all three approaches mentioned above essentially
have the same structure and follow a branch-and-bound approach. The former work
is applicable on any kind of data with a distance measure that is a metric, and
uses any hierarchical tree-like index built on the data. The two latter work are
specifically concerned with \rknn query on spatio-textual data, which they
refer to as \rstknn query.

\begin{figure}[!tbh]
\begin{center}
\subfloat[$\text{}$]{\includegraphics[width=6.275cm]{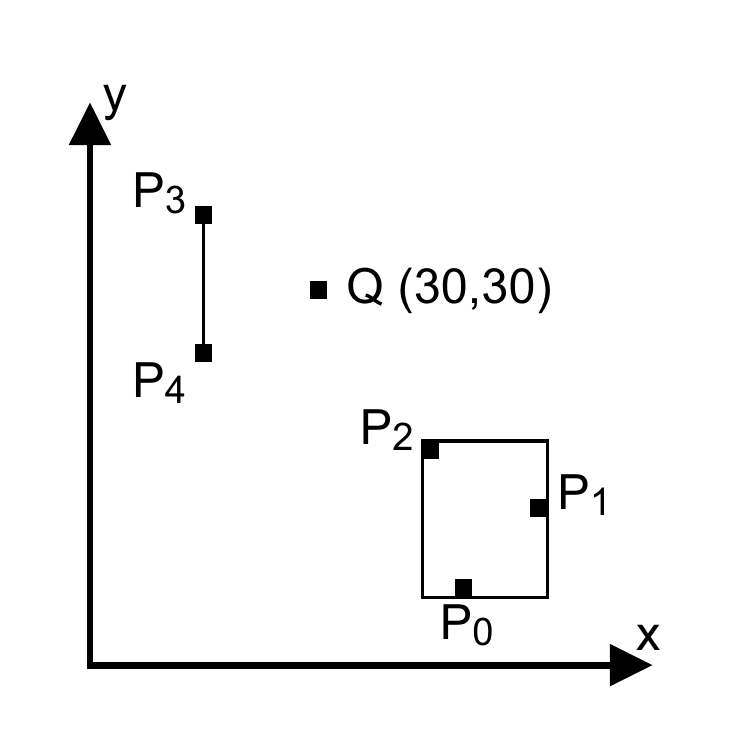}}
\subfloat[$\text{}$]{\includegraphics[width=6.275cm]{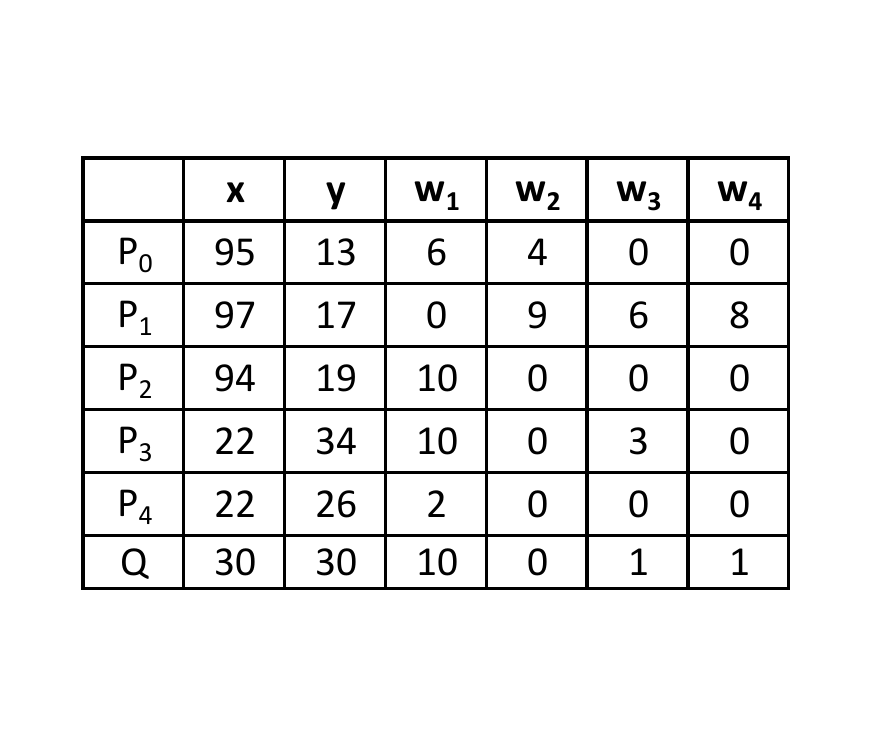}}
\caption{\small Example for illustrating \rstknn and \rknn}
\label{fig:example}
\end{center}
\end{figure}

In \rstknn, each object is
represented by a pair ($loc,vct$) where $loc$ is the spatial location and $vct$ is the
associated textual description which is represented by (word,weight(word)) pairs
for all words appearing in the database. Weight of a word
is calculated on the basis of TF-IDF scheme \cite{salton1988term}. Spatio-textual similarity ($SimST$) is defined by \cite{lu2011reverse} as follows:
\begin{equation}
SimST(o_1,o_2)=\alpha*(1-\frac{dist(o_1.loc,o_2.loc)-\varphi_s}{\psi_s- \varphi_s})+ (1-\alpha)*(\frac{EJ(o_1.vct,o_2.vct)-\varphi_t}{\psi_t- \varphi_t})
\end{equation} 
The parameter $\alpha$ is used to define the relevance factor for spatial and
textual similarity while calculating the total similarity scores and is
specified in a query. $\varphi_s$ and $\psi_s$ denote the minimum and maximum
distance between any two objects in the database and are used to normalize the
spatial similarity to the range $[0,1]$. Similarly $\varphi_t$ and $\psi_t$ denote
the minimum and maximum textual similarity between any two objects in the
database. $dist(\cdot)$ is the Euclidean Distance between $o_1$ and $o_2$ and
$EJ$ is the
Extended Jaccard Similarity \cite{tan2011v} defined as:
\begin{equation}
EJ(o_1.vct,o_2.vct)=\frac{ \sum_{j=1}^n o_1.w_j*o_2.{w'}_{j}}{\sum_{j=1}^n o_1.w_j^2+\sum_{j=1}^n o_2.{w'}_{j}^2-\sum_{j=1}^n o_1.w_j*o_2.{w'}_{j} }
\end{equation}
where $o_1$.vct=$\langle w_1,\ldots,w_n\rangle$ and $o_2$.vct=$\langle w'_1,\ldots,w'_n\rangle$.

As an example, consider Figure \ref{fig:example}. There, considering only
location attributes, and for $k=2$, \rknn of $Q$ are objects $P_3$ and $P_4$.
However, if we consider both spatial and textual similarity, and taking $k=2$
and $\alpha=0.4$, \rstknn of $Q$ is $P_2$, $P_3$ and $P_4$.

Now we will describe the actual algorithm proposed by \cite{lu2011reverse} for 
\rstknn. It is important to present it in some detail -- this is required for
proper appreciation of the inaccuracies in this algorithm. This algorithm
requires its data to be organized as an hierarchical index called as
IUR-tree. IUR-Tree is a R-Tree \cite{guttman1984r}; where every node of the tree is embedded with Intersection and Union Vectors. The textual vectors contain the weight of every distinct item in the documents contained in the node. The weight of every item in the Intersection Vector (resp. Union Vector) is the minimum weight (resp. maximum weight) of all the items present in the documents contained in the node. During the execution of the algorithm, a lower and upper nearest-neighbor list/contribution list is created and maintained for each node in the IUR-Tree. The lower (resp. upper) contribution list stores the minimum (resp. maximum) similarity between the node and its neighbors.
\begin{figure}[!tbh]
\begin{center}
\subfloat[$\text{}$]{\includegraphics[width=6.275cm]{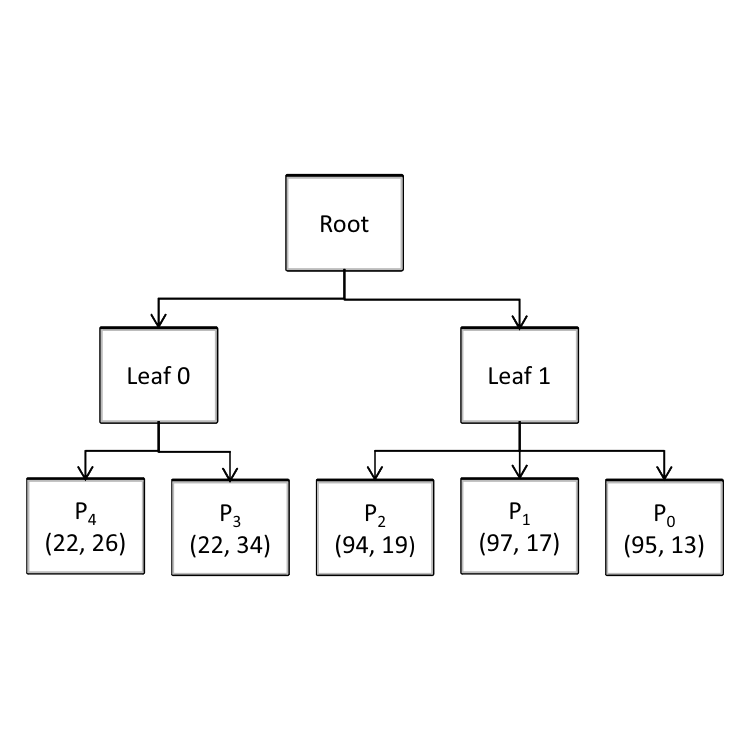}}
\subfloat[$\text{}$]{\includegraphics[width=6.275cm]{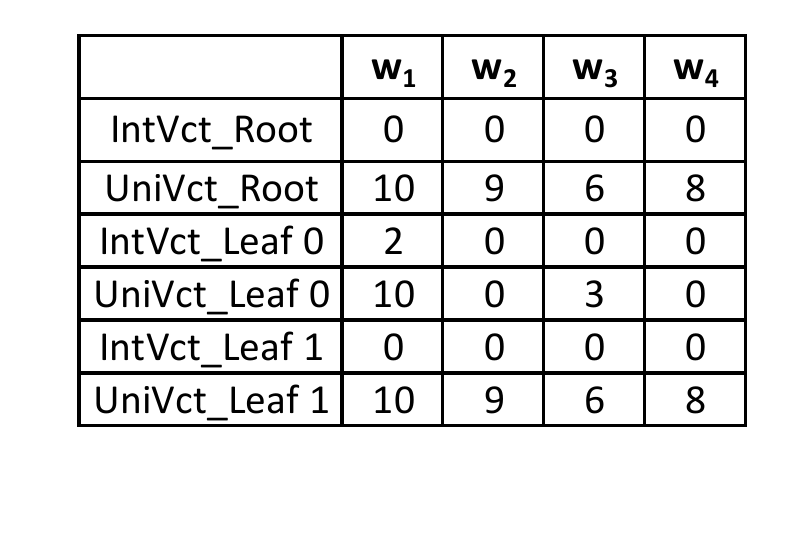}}

\caption{\small IUR-Tree and Textual Vectors of Fig 1 \label{fig:IUR}}
\end{center}

\end{figure}

The IUR-Tree and Intersection and Union Vectors of the corresponding nodes is shown in the Figure \ref{fig:IUR}.
 These vectors along with the MBR's of nodes are used to compute the similarity approximations i.e. upper and lower bounds on the spatio-textual similarity between two groups of objects.

\begin{spacing}{0.8}
\begin{algorithm}[tp]
\caption{\rstknn($R$: IUR-Tree root,$Q$: query) from \protect\cite{lu2011reverse}}
\begin{algorithmic}[1]
\State {\bfseries Output:} All objects $o$, s.t $o$ $\in$\rstknn($Q$,$k$,$R$).
\State Initialize a priority queue $U$, and lists $COL, ROL, PEL$;
\State EnQueue($U,R$);
\While{$U$ is not empty}
	\State $P \leftarrow$ DeQueue($U$); //Priority of $U$ is $MaxST(P,Q)$
	\For {each child node $E$ of $P$}
		\State Inherit($E.CLs, P.CLs$);
		\If{IsHitOrDrop($E,Q$)==false}
			\For{each node $E'$ in $COL,ROL,U$} //see subsection \ref{subsec:locality}
			\State UpdateCL($E,E'$); //update contribution lists of $E$;
			\If{IsHitOrDrop($E,Q$)=true} //see subsection \ref{subsec:completeness}
				\State{\bfseries break;}
			\EndIf
			\If{$E' \in U\cup COL$}
				\State UpdateCL($E',E$); //Update contribution Lists of $E'$ using $E$.
				\If{IsHitOrDrop($E',Q$)=={\bfseries true}}
					\State Remove $E'$ from $U$ or $COL$;
				\EndIf
			\EndIf
			\If{$E$ is not a hit or drop}
				\If{$E$ is an index node}
					\State EnQueue($U,E$);
				\Else
					\State $COL$.append($E$); //a database object
				\EndIf
			\EndIf
			\EndFor
		\EndIf
	\EndFor
\EndWhile
\State Final$\_$Verification($COL,PEL,Q$);
\algstore{myalg}
\end{algorithmic}
\end{algorithm}
\end{spacing}

We refer to an internal node or a point in the IUR-Tree as an entry. The algorithm takes as an input an IUR-Tree (Intersection Union tree) $R$,
query $Q$ and returns all database objects which are \rstknn of $Q$. The data structures used are: a priority queue ($U$) sorted
in decreasing order on $MaxST(E,Q)$, result list ($ROL$), pruned list ($PEL$) and
candidate list ($COL$). $MaxST(E,Q)$ is the maximum spatial textual similarity of the entry $E$ with the query point $Q$. The
algorithm dequeues the root of the IUR-Tree from the queue and for every child $E$ of the root,  inherits the contribution list of its parent. The function UpdateCL($E,E'$) is invoked and the contribution list of $E$ is updated with every $E'$ present in the candidate list, result list and the priority queue. After every invocation to UpdateCL(.), the algorithm checks based on the minimum and maximum bound similarity scores with the $k^{th}$
nearest neighbor, whether to add $E$ to the results,
candidates or pruned list. If $E$ can't be pruned or added to the results, the contribution list of $E'$ is updated with $E$. This process is called the mutual effect. If $E'$ can be added to the results or pruned, it is removed from the queue or $COL$. After updating node $E$ with all entries of $COL$, $ROL$ or $U$, the function \textit{IsHitorDrop()} is again invoked. If $E$ can't be added to the result or pruned list, a check is performed to find out whether $E$ is a internal node or a point. If $E$ is an
internal node, it is added to the queue, else to the candidate list.
When the queue becomes empty, there might be some objects left in the
candidate list. The function \textit{Final$\_$Verification()} is invoked where the candidate objects are updated with all the entries present in $PEL$ to decide whether they
belong to result or not.
\begin{spacing}{0.8}
\begin{algorithm}[tp]
\begin{algorithmic}[1]
\algrestore{myalg}
\Function{Final$\_$Verification($COL,PEL,Q$)}{}
\While{$COL \neq \emptyset$}
	\State Let $E$ be an entry in $PEL$ with the lowest level;
	\State $PEL=PEL-\lbrace E \rbrace$;
	\For {each object $o$ in $COL$}
		\State UpdateCL($o,E$); //update contribution lists of $o$.
		\If{IsHitOrDrop($o,Q$)==true} // see subsection \ref{subsec:completeness}
			\State $COL=COL-\lbrace o \rbrace$;
		\EndIf
	\EndFor
	\For{each child node $E'$ of $E$}
		\State $PEL=PEL \cup \lbrace E' \rbrace$; //access the children of $E'$ 	
	\EndFor
\EndWhile
\EndFunction
\end{algorithmic}
\end{algorithm}
\end{spacing}

\section{Counter-examples}\label{section:counter-ex}
We describe three counter example in this section:
\begin{enumerate}
    \item Inaccuracy regarding computation of $MinT$ and $MaxT$
    \item Inaccuracy w.r.t. Locality Condition
    \item Inaccuracy w.r.t. Completeness Condition
\end{enumerate}
All these examples are
illustrated with respect to the algorithm described in \cite{lu2011reverse};
however we also
explain the concepts used in constructing these examples -- therefore these
examples can be easily modified to suit the other algorithms.
We observed that \cite{achtert2009reverse} proposed an algorithm which maintains
the locality condition, but violates the completeness condition. We recently
observed that \cite{lu2014efficient} modified their previous algorithm from
\cite{lu2011reverse} which now maintains the locality condition. However, their algorithm still violates the completeness condition.

\subsection{Inaccuracy regarding computation of $MinT$ and $MaxT$}
The branch-and-bound algorithm presented in \cite{lu2011reverse} required
cleverly constructed lower and upper bounds on the textual similarity (and combined
textual-spatial similarity) between two groups of data objects. Its authors defined
$MinT$ (minimum possible similarity) and $MaxT$ (maximum possible similarity)
and claimed that these definitions, when used in conjunction with upper and
lower bounds on spatial similarity, give valid upper and lower bounds on the
similarity between two groups of objects. To prove this claim, they used the
following crucial lemma. The first inaccuracy we report is regarding this lemma.

\begin{definition}[Similarity Preserving Function] \cite{lu2011reverse} Given
    two functions $fsim: V \times V \rightarrow \mathbb{R}$ and $fdim:
    \mathbb{R} \times \mathbb{R} \rightarrow \mathbb{R}$, where $V$ denotes the
    domain of $n$-element vectors and $\mathbb{R}$, the real numbers. $fsim$ is
    a similarity preserving function w.r.t $fdim$, such that for any three vectors
    $\vec{p}= \langle x_1\ldots,x_n \rangle$, $\vec{p'} =\langle
    {x_1}^{'}\ldots,{x_n}^{'} \rangle$,$\vec{p^{''}}= \langle
    {x_1}^{''},\ldots{x_n}^{''} \rangle$, if $\forall i  \in [1,n]$,
    $fdim(x_i,{x_i}^{'}) \geq fdim(x_i,{x_i}^{''})$, then we have $fsim(\vec{p},\vec{p^{'}})\geq fsim(\vec{p},\vec{{p}^{''}})$.
\label{defn:similarity_PF}
\end{definition}

\begin{lemma}\cite{lu2011reverse}
    \label{lemma:ej-wrong-lemma}
    Extended Jaccard is similarity preserving function wrt. function $fdim(x,x') =\frac{min (x,x')}{max(x,x')}$ for $x,x'> 0$.
\end{lemma}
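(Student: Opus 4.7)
The plan is to show that Lemma~\ref{lemma:ej-wrong-lemma} as stated is in fact \emph{false}, and to do so by producing a concrete counterexample to the claimed implication. The key enabling observation is a symmetry argument: if the implication held, then applying it in both orderings of $(\vec{p'}, \vec{p''})$ would force that whenever $fdim(x_i, x'_i) = fdim(x_i, x''_i)$ for every $i$, one must have $EJ(\vec{p}, \vec{p'}) = EJ(\vec{p}, \vec{p''})$. It therefore suffices to exhibit three vectors for which the $fdim$ profiles against $\vec p$ coincide coordinatewise, but the resulting Extended Jaccard similarities differ.

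Such vectors should exist because $fdim$ is invariant under the flip $x'_i \leftrightarrow x_i^2/x'_i$ (which inverts the ratio and exchanges the roles of $\min$ and $\max$), while $EJ$ is not: $EJ(\vec p,\vec{p'})$ depends on $\langle \vec p, \vec{p'} \rangle$ and on $\|\vec{p'}\|^2$, so replacing $x'_i$ by $x_i^2/x'_i$ leaves every $fdim$ coordinate unchanged yet can alter both the dot product and the squared norm substantially, especially when the entries of $\vec p$ have widely different scales. I would therefore look for $\vec p$ with one very large and one very small component, and then build $\vec{p'}, \vec{p''}$ by performing this flip in just the ``large'' coordinate.

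A concrete candidate to verify is $\vec p = (100, 1)$, $\vec{p'} = (200, 100)$, $\vec{p''} = (50, 100)$. A direct check shows that $fdim(100,200) = fdim(100,50) = 1/2$ and $fdim(1,100) = 1/100$ in both cases, so the hypothesis of the lemma holds with equality in every coordinate. However, a short calculation gives $EJ(\vec p, \vec{p'}) = 20100/39901 \approx 0.50$ and $EJ(\vec p, \vec{p''}) = 5100/17401 \approx 0.29$, which contradicts the equality of $EJ$ values that the lemma would force under equal $fdim$ profiles.

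The main obstacle is not the arithmetic, which is routine, but identifying the conceptual flaw: the original argument implicitly treats $EJ$ as if it were controlled purely by the componentwise ratio-closeness captured by $fdim$, whereas the term $\sum x_i^2$ in the denominator of $EJ$ makes it genuinely sensitive to the absolute magnitudes of the entries. Once this source of asymmetry is isolated, a counterexample using only two coordinates and an imbalanced $\vec p$ falls out immediately, and the remainder of the disproof is just the explicit arithmetic above.
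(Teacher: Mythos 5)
Your proposal is correct and takes essentially the same route as the paper: both refute the cited lemma with an explicit numerical counterexample to the claimed implication for $EJ$ with $fdim(x,x')=\min(x,x')/\max(x,x')$ (the paper uses $\vec{p}=\langle 100,30\rangle$, $\vec{p'}=\langle 1,40\rangle$, $\vec{p''}=\langle 1,50\rangle$, where the hypothesis holds yet $EJ(\vec{p},\vec{p'})\approx 0.116 < EJ(\vec{p},\vec{p''})\approx 0.135$). Your equal-$fdim$-profile construction is just a relabeling away from a direct counterexample (take the triple $(\vec{p},\vec{p''},\vec{p'})$ with $\vec{p}=(100,1)$, so the hypothesis holds with equality but $5100/17401 < 20100/39901$), and your arithmetic checks out.
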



\paragraph{Counter Example}
Consider three points $p$, $p'$, $p''$ with textual vectors $\vec{p}= \langle 100,30 \rangle$, $\vec{p'}= \langle 1,40 \rangle$, $\vec{p''}= \langle 1,50 \rangle$.
Using $fdim(\cdot,\cdot)$ as defined in Lemma \ref{lemma:ej-wrong-lemma},
observe that the given points satisfy 
the conditions for a similarity preserving
function, i.e., $\forall i \in [1,2]$, $\frac{min(x_i,x'_i)}{max(x_i,x'_i)} \ge
\frac{min(x_i,x''_i)}{max(x_i,x''_i)}$. 
However, $EJ(p,p') = 0.116 \not\geq EJ(p,p'') = 0.135 $ which contradicts Definition
\ref{defn:similarity_PF}. The $MinT$ and $MaxT$ formula given in the paper relied
on the above Lemma to be correct, which therefore become invalid.


We now present our approach to calculate $MinT$ and $MaxT$ between two groups of
textual objects $E$ and $E'$.
As explained earlier, every textual object is represented as a vector of term
frequencies. For any group of objects, their intersection vector (resp. union
vector) has been defined to be a vector whose every coordinate is the minimum (resp. maximum)
frequency among the corresponding coordinates of objects. Denoting the
intersection and union vectors of $E$ as $\langle E.i_1, E.i_2, \ldots \rangle$
and $\langle E.u_1, E.u_2, \ldots \rangle$, notice that for every $o \in E$, and
$j \in [1,n]$, $E.i_j \le o.w_j \le E.u_j$. 
We propose the following formul\ae\ for $MinT$.
\begin{equation}
MinT(E,E')=\frac{ \sum_{j=1}^n E.i_j*E'.i_j}{\sum_{j=1}^n E.u_j^2+\sum_{j=1}^n E'.u_j^2-\sum_{j=1}^n E.i_j*E'i_j }
\end{equation}
The idea for computing $MinT$ is that since it is a lower bound, we want to
minimize the term in the numerator and maximize the denominator of EJ to ensure
that $\forall o \in E$ and $\forall o' \in E', EJ(o,o') \geq MinT(E,E')$.
Similarly formul\ae\ for $MaxT$ is given below:
\begin{equation}
MaxT(E,E')=\frac{ \sum_{j=1}^n E.u_j*E'.u_j}{\sum_{j=1}^n E.i_j^2+\sum_{j=1}^n E'.i_j^2-\sum_{j=1}^n E.u_j*E'u_j }
\end{equation}

\subsection{Inaccuracy w.r.t. Locality Condition}\label{subsec:locality}
\begin{figure}[!htb]
\begin{center}
\subfloat[$\text{Distribution of Points}$]
{\includegraphics[width=0.4\linewidth]{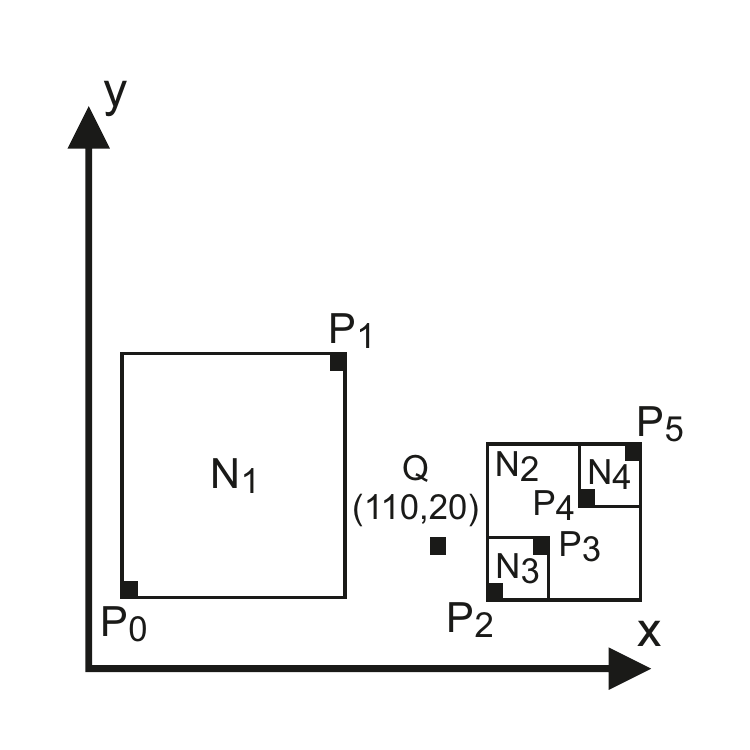}}
\subfloat[$\text{IUR Tree}$]{\includegraphics[width=0.6\linewidth]{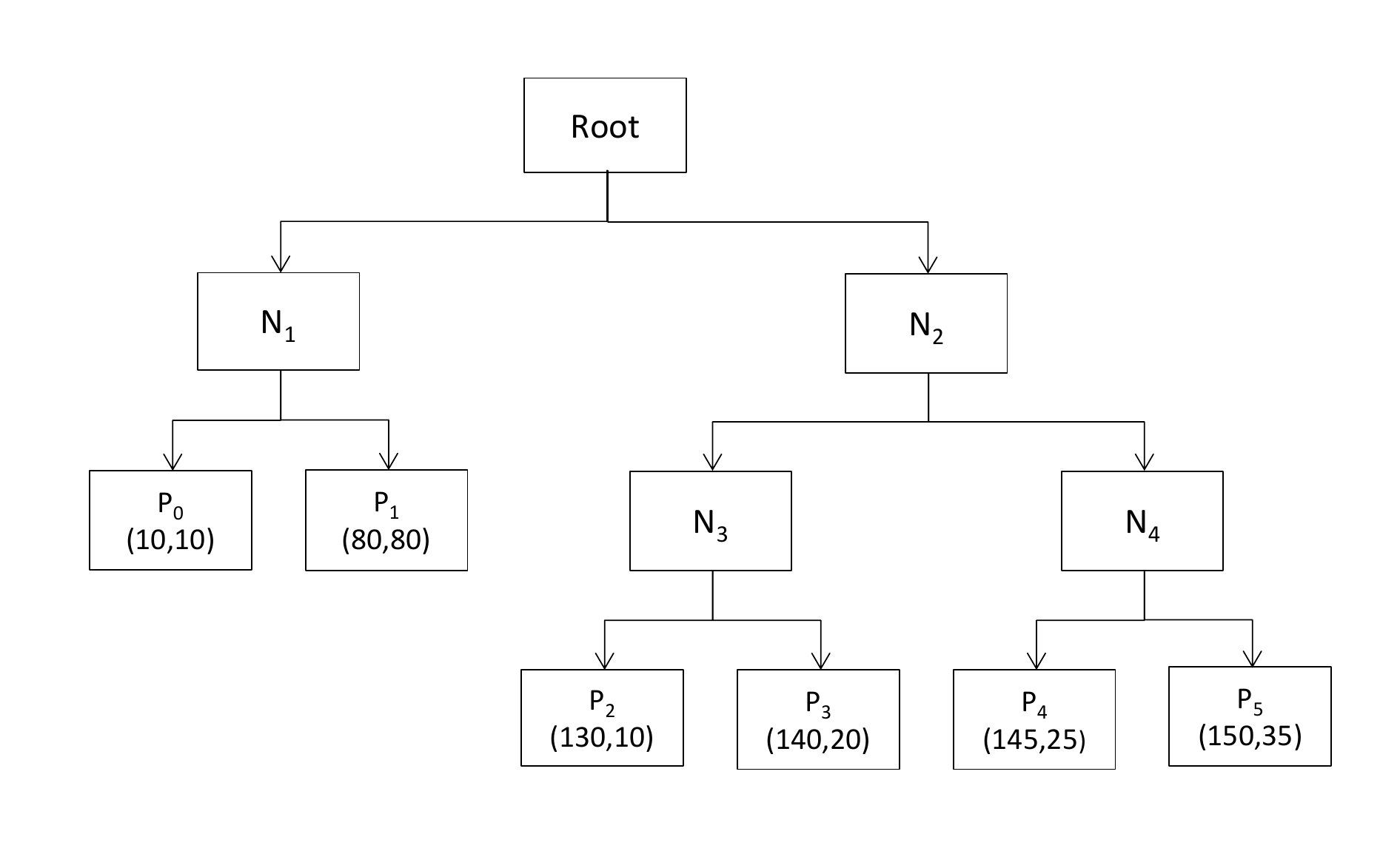}}
\caption{\small Counter-example (Locality and Completeness conditions)
\label{fig:counter-ex}
 } 
\end{center}
\end{figure}

Consider the following counter-example for the dataset and IUR-Tree illustrated in Figure
\ref{fig:counter-ex}, and let $\alpha=1$ and $k=2$. The minimum and maximum distance between any two points in the database is $\varphi_s$=7.07 and $\psi_s$=142.21. The exact \rstknn of the
query point $Q$ is $P_0$ and $P_1$. The trace of the algorithm \cite{lu2011reverse} is shown in Table
\ref{Tab:rstknn_algo}. We will focus on step 1 here. The root of the tree is dequeued from the tree and node $N_1$ is processed. $N_1$ inherits the contribution lists of its parent, which is empty. Since $U$, $ROL$ and $COL$ are empty, $N_1$ is simply added to the queue. Now, node $N_2$ is processed. $N_2$ updates
its upper and lower contribution lists with $N_1$ and invokes IsHitOrDrop. The upper and lower contribution lists of $N_2$ upon invoking IsHitOrDrop is :\\
$N_2.^{L}$.CL=$\lbrace (N_1,0,2) \rbrace$\\
$N_2.^{U}$.CL=$\lbrace (N_1,0.68,2) \rbrace$\\
Since $MinST(N_2,Q)=0.73$, which is more than the upper bound given by
$N_2.^{U}$.CL, at this point node $N_2$ is accepted (wrongly) as the \rstknn of $Q$. \begin{table}[!htb]
\begin{tabular}{cc}%
\noindent\begin{minipage}[t]{\linewidth}
\begin{center}
\caption{Trace of \rstknn Algorithm (2011) \label{Tab:rstknn_algo}}
    \begin{tabular}{|l | p{4cm} | l | l | l | l |}
    \hline
    \bfseries{Steps} & \bfseries{Actions} & \bfseries{U}& \textbf{COL} & \textbf{ROL} & \textbf{PEL} \\
    \hline
    1 & Dequeue Root, Enqueue $N_1$ & $N_1$ &$\emptyset$& $P_2$, $P_3$, $P_4$, $P_5$ & $\emptyset$\\ 
    \hline
    2 & Dequeue $N_1$ & $\emptyset$ & $\emptyset$ & $P_0$,$P_1$,$P_2$, $P_3$ $P_4$, $P_5$ & $\emptyset$\\
    \hline
    \end{tabular}
\end{center}
\end{minipage}
 \end{tabular}
\end{table} 

We attribute this fault to the violation of the {\em Locality
Condition}, a property that, we claim, must have been followed by these
algorithms.

\paragraph{Locality Condition} Nearest neighbors of data points in a node may belong to
the node itself; hence, every node
should compute similarity with itself and include itself as a candidate (along
with other similar nodes) in any test to prune or accept the node as \rstknn of $Q$.

In the counter-example above, node $N_2$ does not satisfy this condition
since its contribution lists do not contain itself or points inside it.

\subsection{Inaccuracy w.r.t. Completeness Condition}\label{subsec:completeness}
The trace of the algorithm \cite{lu2014efficient} is shown in Table \ref{Tab:rstknn_algo_2014}.
\begin{table}[!htb]
	\begin{tabular}{cc}%
		\noindent\begin{minipage}[t]{\linewidth}
			\begin{center}
				\caption{Trace of \rstknn Algorithm (2014) \label{Tab:rstknn_algo_2014}}
				\begin{tabular}{|l | p{4cm} | l | l | l | l |}
					\hline
					\bfseries{Steps} & \bfseries{Actions} & \bfseries{U}& \textbf{COL} & \textbf{ROL} & \textbf{PEL} \\
					\hline
					1 & Dequeue Root, Enqueue $N_1$, Enqueue $N_2$ & $N_1$, $N_2$ &$\emptyset$& $\emptyset$ & $\emptyset$\\ 
					\hline
					2 & Dequeue $N_2$ & $N_1$ & $\emptyset$ & $P_2$, $P_3$ & $N_4$\\
					\hline
					3 & Dequeue $N_1$ & $\emptyset$ & $\emptyset$ & $P_0$, $P_1$, $P_2$, $P_3$ & $N_4$\\
					\hline
				\end{tabular}
			\end{center}
		\end{minipage}
	\end{tabular}
\end{table} 

We will now focus on Step 2, when node $N_2$ is dequeued from the priority queue,
and its children are now being processed.
Node $N_3$ is now processed and it inherits the contribution lists of its parent $N_2$. The function IsHitOrDrop is called, but $N_3$  can't be pruned or added to the results. After invocation of IsHitOrDrop, $N_3$ updates its contribution list with itself to maintain the locality condition. $N_3$ further updates its contribution list with other entries present in $COL$, $ROL$ and $U$ sorted in the decreasing order of the maximum spatio-textual similarity with $N_3$. The upper and lower contribution list of $N_3$ is shown below :\\ \\
$N_3.^{L}$.CL=$\lbrace (N_3,0.94,1),(N_1,0,2) \rbrace$\\
$N_3.^{U}$.CL=$\lbrace (N_3,1,1),(N_1, 0.68,2) \rbrace$\\
Since $MaxST(N_3,Q)=0.90$, which is less than 0.68; so at this point 
$N_3$ is accepted (wrongly) as \rstknn of $Q$.
We claim that this faulty
behaviour is due to not ensuring the {\em Completeness Condition}, viz.,
absence of $N_4$ in contribution lists of $N_3$.
This condition is discussed in more detail in Section \ref{sec:nn-u}. In this
example, the contribution lists of $N_3$ is not complete.



\section{Proposed \rstknn Query Algorithm }\label{section:algo}
In this section, we present a modified algorithm to answer \rstknn
queries. We will illustrate our algorithm with an example, pointing
out the modifications and end this sections with a formal
proof of correctness. We begin by formalizing some notions which will be
used in the algorithm, and will be crucial in ensuring its correctness.

As explained earlier, the algorithms we considered worked on data that was stored in a hierarchical
tree-like index, where the leaf nodes are data points themselves (to
be represented by small letters) and internal nodes (to be represented by
CAPITAL letters) contain pointers to children
nodes.
Our modified algorithm will share backbone of these algorithms; however,
structually, it will bear resemblance to the algorithm presented in
\cite{lu2011reverse,lu2014efficient}. However, it will be presented in a
generalized manner which can be used to perform \rknn queries, given any value
of $k$, on a wide variety of data and 
independent of the explicit indexing structure used.
The only requirement from the data and the index is a similarity measure $Sim(\cdot,\cdot)$
among the data points, information about the of number of objects in each
node and estimates $MinSim$ and $MaxSim$ among nodes (explained below).

\subsection{Contribution List a.k.a. $NN$-list}
We will use the following notation: if $e'$ is the $k^{th}$ nearest neighbor of $e$, then we will write $e'$ as $kNN(e)$.
We will use the convention that
a point is the $0^{th}$ nearest neighbor of itself.
An immediate observation is the following: $Sim(e,kNN(e)) \geq Sim(e,k'NN(e))$
for any $k' \geq k$.

One way to answer \rknn queries
is by computing the list of nearest neighbors ($NN$-list) for every data point $e$:
$NN(e)$ is an ordered list of data points $\langle e_1,e_2,e_3,\ldots\rangle$ such
that $e_1$ is $1NN(e),$ $e_2$ is $2NN(e)$ and so on. Computing this list explicitly for
every data point could be very inefficient. 
The usual approach followed by branch-and-bound algorithms like
\cite{achtert2009reverse,lu2011reverse,lu2014efficient} is searching the index top-down
while maintaining two NN-lists with each node - one contains an
overestimate of its nearest neighbor, and another containing an underestimate of
the same. These estimated lists are constructed using two functions
$MinSim(\cdot, \cdot)$ and $MaxSim(\cdot, \cdot)$ which must satisfy the
property below. The actual implementation of these functions depend crucially on
the type of data used and the index. For two nodes $E$ and $E'$,
\begin{itemize}
    \item $MinSim(E,E')$ must give a lower bound for the minimum similarity between pairs of
points from $E$ and $E'$ i.e. $\forall e \in E$ ,$ \forall e' \in E'$,
$Sim(e,e') \geq MinSim(E,E')$.
    \item $MaxSim(E,E')$ must give an upper bound for the maximum similarity between pairs of points from $E$ and $E'$ i.e. $\forall e \in E$ , $\forall e' \in E'$, $Sim(e,e') \leq MaxSim(E,E')$.
\end{itemize}

Next, we will define the main component of our algorithm, a formalization of
{\em contribution lists} (CL) used in earlier algorithms.

\begin{definition}[NN-list \label{NN-list}]
An NN-list of a node $E$ is a list of tuples: $\langle (E_1,m_1),(E_2,m_2)
\ldots \rangle$, where each $E_i$ is a node and $m_i$ is a positive integer.
\end{definition}

The NN-lists we will maintain per node are $NN_U$($E$) and $NN_L$($E$) whose tuples will
provide estimates to the similarity of $E$ to its $r^{th}$ nearest neighbor, for various
values of $r$.

\subsection{Lower bound list $NN_L$}
The central idea behind the $NN_L$ list comes from the following observation.
Suppose for a set of $m$ points $\lbrace e_1',e_2',\ldots,e_m' \rbrace$ and
another point $e$, we have that $Sim(e,e_i') \geq s$. Then, it is obvious that
if $e$ does not belong to this set, $Sim(e,mNN(e)) \geq s$; and if $e$ belongs
to this set, then $Sim(e,(m-1)NN(e)) \geq s$. Extending this concept to nodes,
consider any node $E$ with $m$ data points; now, if $MinSim(E,e) \geq s$ then,
$Sim(e,mNN(e)) \geq s$ if $e \not  \in  E$  and $Sim(e,(m-1)NN(e)) \geq s$ if $e
\in  E$. Notice that these bounds are tight.

We can even extend this idea to multiple nodes to get the following claim. Let
$e$ be a data point and $E_1,E_2,\ldots,E_k$ be a collection of non-overlapping
nodes which do not contain $e$, where the list is sorted in decreasing order of
$MinSim(E_i,e)$. Let $m_i$ denote the number of data points in $E_i$, and let
$s_i$ be a lower bound on $MinSim(E_i,e)$. Then, for all $j= 1\ldots k, Sim(e,
(\sum_{i=1}^j) m_i)NN(e)) \geq s_j$. If $e \in E_i$ for some $i$, then $m_i$
must be replaced with $m_i-1$. We can generalize this even further by
considering a node instead of $e$.

\begin{definition}[Lower NN-list \label{lower-contri-list}]
An NN-list $\langle (E_1,m_1),\ldots \rangle$ of non-overlapping nodes is a
valid $NN_L(E)$ if:  
\begin{itemize}
\item the list is sorted in decreasing order of $MinSim(E_i,E)$
\item for all $e \in E$, if $E$ does not overlap with $E_i$, then $m_i \leq
    \vert E_i \vert$ and if $E$ overlaps with $E_i$, then $m_i \leq \vert E_i \vert - 1$
\end{itemize}
\end{definition}

The following lemma describes the use of lower NN-lists to get underestimates of
nearest neighbors. The proof is immediate from earlier definitions.
\begin{lemma}\label{lemma:NNL-lemma}
    For any $t$ and $i$ that satisfies $\sum_{k=1}^{i-1} m_k < t \leq
    \sum_{k=1}^i m_k$ (including the case $t \le m_1, i=1$), it holds that for all $e \in
    E$, $Sim(e,tNN(e)) \geq MinSim(e,E_i)$.
\end{lemma}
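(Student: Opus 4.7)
The plan is to exhibit explicitly at least $t$ data points, each distinct from $e$, whose similarity to $e$ is guaranteed to be at least the claimed lower bound; since $tNN(e)$ is by definition the $t$-th most similar data point to $e$, producing $t$ such witnesses immediately implies $Sim(e,tNN(e))$ meets that bound.

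First I would fix an arbitrary $e \in E$ together with $t$ and $i$ satisfying $\sum_{k=1}^{i-1} m_k < t \leq \sum_{k=1}^i m_k$, and consider the collection of data points contained in the union $E_1 \cup E_2 \cup \ldots \cup E_i$, deliberately excluding $e$ itself. Because the nodes $E_k$ are pairwise non-overlapping by Definition \ref{lower-contri-list}, I can count contributions $E_k$-by-$E_k$. The second clause of the definition is tailored precisely for this bookkeeping: when $E \cap E_k = \emptyset$ we have $m_k \le |E_k|$, and when $E$ overlaps $E_k$ we have $m_k \le |E_k|-1$, which safely deducts the potential copy of $e$ sitting inside $E_k$. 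Summing, the union contains at least $\sum_{k=1}^i m_k \geq t$ points different from $e$.

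Next, for any such witness point $e' \in E_j$ with $j \le i$, the defining property of $MinSim$ gives $Sim(e,e') \geq MinSim(E,E_j)$ because $e \in E$, and then the decreasing-order sorting by $MinSim(E_k,E)$ required in Definition \ref{lower-contri-list} yields
\begin{equation*}
Sim(e,e') \;\geq\; MinSim(E,E_j) \;\geq\; MinSim(E,E_i).
\end{equation*}
Thus at least $t$ data points distinct from $e$ have similarity to $e$ no less than $MinSim(E,E_i)$ (which coincides with, or is subsumed by, the bound $MinSim(e,E_i)$ appearing in the statement once $e$ is treated as a singleton node). Consequently the $t$-th nearest neighbor of $e$ cannot be less similar than any of these witnesses, giving the claim.

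The only delicate point I anticipate is the potential double-counting of $e$ when $E$ overlaps some $E_j$ with $j \leq i$: without the $m_j \leq |E_j|-1$ adjustment, the pigeonhole count could drop from $t$ to $t-1$ and the argument would collapse at the boundary of the range for $t$. Once this bookkeeping is correctly discharged by appealing to the two cases in the NN-list definition, the remainder of the argument is a clean combination of the $MinSim$ lower-bound property, the sorting condition, and the pigeonhole principle, with no further technical obstacles.
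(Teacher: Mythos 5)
Your proof is correct and is essentially the paper's own argument: the paper dismisses the proof as ``immediate from earlier definitions,'' by which it means exactly the counting/pigeonhole idea sketched in the paragraph preceding the Lower NN-list definition (at least $\sum_{k\le i} m_k \ge t$ witnesses distinct from $e$ in $E_1\cup\cdots\cup E_i$, each with similarity at least the sorted bound, with the $m_k \le |E_k|-1$ adjustment when $E$ overlaps $E_k$), which you have simply spelled out. One caution on your parenthetical: what you actually establish is the node-level bound $MinSim(E,E_i)$ --- which is the version the paper in fact uses in Theorem \ref{thm:accept-prune-condition} --- whereas a genuinely pointwise $MinSim(e,E_i)$ could be strictly tighter and is \emph{not} subsumed by it, so the lemma should be read (as the paper implicitly does) with the node-level quantity rather than claiming the two coincide.
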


\subsection{Upper bound list $NN_U$}\label{sec:nn-u}
We want to define $NN_U$ as an overestimation of nearest neighbors similar to
$NN_L$ and derive a similar lemma as Lemma \ref{lemma:NNL-lemma}; however, we
require an additional concept first.

\begin{definition}[Complete NN-list]
    We say that an NN-list $NN(E)$ is {\em complete} if every data
    point is present in some node in the NN-list, and for every $(E_i,m_i)$ in the list, 
\begin{itemize}
\item if $E$ does not overlap with $E_i$, then $m_i = \vert E_i \vert$
\item if $E$ overlaps with $E_i$, then $m_i = \vert E_i \vert - 1$
\end{itemize}
\end{definition}

It must
be noted that an $NN_L$ list need not be complete for it to satisfy Lemma
\ref{lemma:NNL-lemma}. However,
similar arguments do not work for $NN_U$. Take for example, the example
situation similar to
the one described for $NN_L$: we have a set of points $\lbrace e_1',e_2',\ldots
e_m' \rbrace$ and another point $e$ (all distinct). But even if we know that
$Sim(e,e_i') \leq s$ for some $s$ and for all $i$, it is nevertheless not true
that $Sim(e,mNN(e)) \leq s$, unless, all points other than $e$ are in the set --
which is precisely what a complete NN-list specifies.

Now we can define similar concepts like $NN_L$.
\begin{definition}[Upper NN-list]
For a node $E$, an NN-list $\langle (E_1,m_1),\ldots \rangle$ of non-overlapping nodes is a valid $NN_U(E)$ when the following holds:
\begin{itemize}
\item the list is sorted in decreasing order of $MaxSim(E_i,E)$
\item the list is complete
\end{itemize}
\end{definition}

Observe that the completeness condition requires that $NN_U(E)$ must contain $E$
itself, or its parent node, or all its children nodes -- this is essentially
the locality condition we mentioned earlier (Section \ref{subsec:locality}).
However, we have chosen to specifically highlight the above condition separately
from the more general completeness condition.
The main working lemma for $NN_U$ follows next.
\begin{lemma}\label{lemma:NNU-lemma}
    For any $t$ and $i$ such that $\sum_{k=1}^{i-1} m_k < t \leq \sum_{k=1}^i
    m_k$ (including the case when $i=1$ and $t \le m_1$), it holds that for all $e \in
    E$, $Sim(e,tNN(e)) \le MaxSim(e,E_i)$.
\end{lemma}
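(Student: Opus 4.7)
The plan is to mirror the counting argument that underlies Lemma \ref{lemma:NNL-lemma}, except that here the \emph{completeness} requirement will do the essential work that it did not need to in the lower-bound case. Fix $e \in E$ and let $N_j := \sum_{k=1}^j m_k$. It suffices to show that at most $N_{i-1}$ data points \emph{other than} $e$ have similarity to $e$ strictly greater than $MaxSim(e,E_i)$; once that is established, the $t$-th nearest neighbor of $e$ for any $t \in (N_{i-1}, N_i]$ is forced to have $Sim$ value at most $MaxSim(e,E_i)$, since by convention $tNN(e)$ is the $t$-th closest point other than $e$ itself.

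The first step is to use the sorted order of the upper NN-list to handle the ``tail'' of the list. For every $k \geq i$ and every $e' \in E_k$, the defining property of $MaxSim$ applied to the singleton $\{e\} \subseteq E$ gives $Sim(e,e') \leq MaxSim(E,E_k)$, and the monotonicity from the decreasing sort gives $MaxSim(E,E_k) \leq MaxSim(E,E_i)$. Hence every point in $\bigcup_{k \geq i} E_k \setminus \{e\}$ lies within the target bound, with $MaxSim(e,E_i)$ interpreted (as in Lemma \ref{lemma:NNL-lemma}) as the appropriate specialization of $MaxSim$ to the relevant pair.

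The second step counts these points using completeness. Because every data point belongs to some node on the list, and because the definition of $m_k$ subtracts one exactly when $E$ overlaps $E_k$ (which in particular covers the node containing $e$), the number of other points inside $\bigcup_{k \geq i} E_k$ is at least $\sum_{k \geq i} m_k$. The total number of other points in the database is $\sum_{k} m_k$, so at most $N_{i-1} = \sum_{k<i} m_k$ other points remain outside the tail and could therefore violate the bound. Combined with the first step, this is exactly the inequality we wanted on $tNN(e)$.

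The step I expect to be the main obstacle is this counting, because the definition of completeness treats overlap only coarsely (one subtraction if $E$ overlaps $E_k$, independent of how large the overlap is). To make the bookkeeping watertight for every individual $e \in E$, I would need to appeal to the hierarchical tree structure, in which any two nodes are either disjoint or nested, and argue that under this structure the single subtraction $m_k = |E_k|-1$ is precisely aligned with the ``minus one for $e$ itself'' accounting that the argument above requires. Once this alignment is carefully verified, the rest of the proof is just the two bound-and-count steps sketched above.
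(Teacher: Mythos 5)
Your argument is essentially the paper's own: the paper states Lemma \ref{lemma:NNU-lemma} without a separate proof, relying on exactly the bound-and-count reasoning you give (sortedness handles every point in $\bigcup_{k\ge i}E_k$, completeness guarantees no uncounted points exist outside the list), so in approach you are aligned with the intended justification, and your first two steps are fine (the slight mismatch between $MaxSim(e,E_i)$ and $MaxSim(E,E_i)$ is a notational looseness already present in the paper's statement of both lemmas; only the node-level bound is ever used in Theorem \ref{thm:accept-prune-condition}).

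The step you flag as the main obstacle is indeed the one real subtlety, and your instinct to invoke the tree structure is the right resolution. Since the listed nodes are pairwise non-overlapping and live in a hierarchical index, any listed node overlapping $E$ must be comparable to $E$, and in the lists the algorithm actually builds the overlapping entry is $E$ itself or an ancestor of $E$; such a node contains \emph{every} $e\in E$, so the single subtraction $m_k=|E_k|-1$ coincides exactly with the ``minus one for $e$'' that your count requires, and then the number of points other than $e$ outside the tail is at most $\sum_{k<i}m_k$, closing the argument. Be aware, however, that your caution is not vacuous: if completeness were instead achieved by listing \emph{all children} of $E$ (an option the paper's discussion explicitly allows), the per-node subtraction discounts once for each child even though a given $e$ lies in only one of them, the partial sums $N_j$ shrink, and the claimed bound with index $i$ can genuinely fail (take $e$ in one child, many close points in a sibling child, and a distant node $E_i$ further down the list). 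So the lemma is correct precisely under the alignment you identified — at most one listed node overlaps $E$ and that node contains all of $E$ — which is what the algorithm guarantees; your proof is complete once you state and use that fact rather than the looser claim that the total number of other points is $\sum_k m_k$.
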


\subsection{Branch-and-bound traversal}
A branch-and-bound algorithm traverses a hierarchical index by first visiting
the root, and then exploring its children nodes, and so on. For every node it
visits, the algorithm decides what to do next based on some estimate of the
relevance of the current node to the desired answer (here, $NN_U$ and $NN_L$
lists). It may choose to further explore the node, add all the points
in the node to the result set and not explore the node further (aka. {\em
accepting} the node), or, simply not explore the node further because it decided
that the node does not contain any point that should be in the result set (aka.
{\em pruning} the node).

Suppose the query point is denoted by $Q$; and suppose that a branch-and-bound
algorithm is currently visiting $E$ during its traversal of the
index. Let $NN_L(E)$ denote the (valid) lower NN-list of $E$
node, and $NN_U(E)$ denote its (valid) upper NN-list. Also, suppose $i$ is the smallest
index such that $k \le \sum_{t=1}^{i} m_k$ for $NN_L(E)$, and 
$j$ is the smallest similar index for $NN_U(E)$.

Here are the main theorems that give us sufficient conditions for accepting and
pruning certain nodes in the index during a branch-and-bound traversal.

\begin{theorem}[Accepting and Pruning
    Condition]\label{thm:accept-prune-condition}
\begin{enumerate}
    \item If $MaxSim(E,Q) \leq MinSim(E,E_i)$, then $Q$ cannot have any node in $E$ in
its \rknn set. Therefore, $E$ can be pruned.
    \item If $MinSim(E,Q) > MaxSim(E,E_j)$, then all nodes in $E$ belong to \rknn of $Q$
and so $E$ can be accepted.
\end{enumerate}
\end{theorem}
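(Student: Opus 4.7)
The plan is to derive both parts directly from the working lemmas \ref{lemma:NNL-lemma} and \ref{lemma:NNU-lemma}, specialized to $t = k$. The indices $i$ and $j$ in the theorem statement are chosen precisely so that $\sum_{t=1}^{i-1} m_t < k \le \sum_{t=1}^{i} m_t$ in $NN_L(E)$ and the analogous bracket holds for $NN_U(E)$; this is exactly the hypothesis those lemmas require, so both can be invoked at $t = k$ with no preliminary work. What remains is to chain their conclusions against the theorem's hypotheses and the defining bound properties of $MinSim$ and $MaxSim$.

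For part (1), I would fix an arbitrary $e \in E$ and combine three inequalities: $Sim(e, Q) \le MaxSim(E, Q)$ from the defining upper-bound property of $MaxSim$; $MaxSim(E, Q) \le MinSim(E, E_i)$ from the hypothesis; and $MinSim(E, E_i) \le Sim(e, kNN(e))$ from Lemma \ref{lemma:NNL-lemma} with $t = k$ (using, if one wants to be pedantic, the elementary fact that $MinSim(E, E_i) \le MinSim(e, E_i)$ since $\{e\} \subseteq E$). Composing these yields $Sim(e, Q) \le Sim(e, kNN(e))$, so $Q$ is no closer to $e$ than $e$'s own $k$th nearest neighbor, i.e., $e$ does not belong to the \rknn set of $Q$. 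Since $e$ was arbitrary, the entire subtree rooted at $E$ can be discarded.

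For part (2), I would fix $e \in E$ and build the dual chain
\[
Sim(e, Q) \;\ge\; MinSim(E, Q) \;>\; MaxSim(E, E_j) \;\ge\; Sim(e, kNN(e)),
\]
where the first inequality uses the defining lower-bound property of $MinSim$, the middle strict inequality is the hypothesis, and the last is Lemma \ref{lemma:NNU-lemma} applied with $t = k$. This gives $Sim(e, Q) > Sim(e, kNN(e))$, so $Q$ is strictly closer to $e$ than $e$'s $k$th nearest neighbor and therefore $e \in$ \rknn of $Q$. Since this holds for every $e \in E$, the node $E$ may be accepted wholesale.

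The only subtle point I foresee is tie handling in part (1): since both the hypothesis and Lemma \ref{lemma:NNL-lemma} are non-strict, the derived inequality $Sim(e, Q) \le Sim(e, kNN(e))$ is also non-strict, leaving an ambiguity when $Q$ is exactly tied with $e$'s $k$th nearest neighbor. Under the standard \rknn convention that $Q$ does not displace an existing $k$th nearest neighbor on equality this is enough; otherwise, the pruning hypothesis can be strengthened to a strict inequality with no effect on the algorithm or its counter-examples elsewhere in the paper. I expect this cosmetic issue to be the only point requiring care; the rest is a direct chaining of the two working lemmas with the theorem's hypotheses.
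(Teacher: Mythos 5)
Your proof is correct and takes essentially the same route as the paper, which simply declares both parts immediate from Lemmas \ref{lemma:NNL-lemma} and \ref{lemma:NNU-lemma} applied at $t=k$; you merely spell out the chaining with the defining bounds of $MinSim$/$MaxSim$. Your tie-handling remark matches the paper's footnote, which breaks ties in favour of database points, so no change to the hypotheses is needed.
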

The proofs for the two cases are immediate from
Lemma \ref{lemma:NNL-lemma} and \ref{lemma:NNU-lemma}, respectively.
\footnote{For accepting or pruning, in case there is a tie between
similarities between query point and a database point, we tie-break in favour of points in the database. The alternative
approach requires straight forward modification to the results in this
subsection.}.

\subsection{Algorithm}
Now we will discuss the modified algorithm for finding reverse nearest neighbors on
spatial-textual objects. Our algorithm is a modification of the one proposed in
\cite{lu2011reverse}, so we will mostly engage in highlighting the major changes.
Like the original algorithm, our algorithm uses the following data structures: a FIFO queue ($U$), a result
list ($ROL$), candidate list ($COL$) and pruned list ($PEL$). We use a FIFO
queue instead of a priority queue, as each entry of needs to update its
NN-list with every other entry present in every list  in order to
ensure completeness of lists. So, the order in which other entries are added is
irrelevant. We will frequently use NN-lists to refer to both the upper and lower
NN-lists of the corresponding entry.

As before, the algorithm initializes the lists and enqueues the root of the IUR-tree.
While the queue is not empty, an entry $E$ is dequeued from the queue and its
parent is removed from its NN-list. The two key modifications we suggest are
stated next. First, if $E$ is an internal node of the tree, it
adds itself to its NN-lists, thereby maintaining the {\em locality condition}
(line 12).
Then $E$ updates its NN-lists with each entry $E'$ present in the queue and vice
versa. The updation of NN-list of $E$ with every other entry in the queue
maintains the {\em completeness condition} (line 14). After this, IsHitorDrop is invoked to check if $E$ can
be pruned or added to the results. If $E$ can neither be pruned nor added to the
results, its children are added to the queue if $E$ is an internal node;
otherwise, $E$ is added to the candidate list. We continue with the optimisation
of having the children of $E$ copy the NN-list of $E$ before they are enqueued to $U$.
When the queue
becomes empty, there might be some candidate points left in the candidate list.
The procedure Final$\_$Verification is invoked to decide whether the points
present in the candidate list belong to the result list or the pruned list; this
procedure essentially checks every candidate point with other entries.

\begin{spacing}{0.8}
\begin{algorithm}[tp]
\caption{\rstknn($R$: IUR-Tree root,$Q$: query)}
\begin{algorithmic}[1]
\State {\bfseries Output:} All objects $o$, s.t $o$ $\in$RST$k$NN($Q,k,R$).
\State Initialize a FIFO queue $U$, and lists $COL, ROL, PEL$;
\State EnQueue($U,R$);
\While{$U$ is not empty}
	\State $E$ $\leftarrow$ DeQueue($U$); //FIFO Queue
	\For{each tuple $\langle E'_i,num_i \rangle \in NN_L(E)$}
		\If{$E'_i=E$ or $E'_i=Parent(E)$}
			\State remove $\langle E'_i,num_i \rangle$ from
			$NN_L(E)$ and $NN_U(E)$ ;
		\EndIf
	\EndFor
	\If( $E$ is an internal node)
		\State Additself($E$)				//Ensure locality condition
	\EndIf	
	\For{each entry $E'$ in $U$}   // Ensure completeness condition   
		\State Update$\_$NN-list($E,E'$); //mutual effect
		\State Update$\_$NN-list($E',E$); //mutual effect
	\EndFor
	\If{$E$ is not a hit or drop}
		\If{$E$ is an index node}
			\For{each child $C_E$ of $E$}
				\State Inherit($NN_L(C_E),NN_L(E)$);
				\State Inherit($NN_U(C_E),NN_U(E)$);
				\State EnQueue($C_E$)				
			\EndFor
			
		\Else
			\State $COL$.append($E$); 
		\EndIf
	\EndIf
\EndWhile
\State Final$\_$Verification($COL,PEL,ROL,Q$);
\algstore{ouralg}
\end{algorithmic}
\end{algorithm}
\end{spacing} 
We illustrate the working of our algorithm on the example presented
earlier (Figure \ref{fig:counter-ex}) in
Table \ref{Tab:our_algo}. As expected, the algorithm now correctly returns $P_0$ and $P_1$
as the only points in \rstknn of $Q$.

\begin{spacing}{0.8}
\begin{algorithm}[tp]
\begin{algorithmic}[1]
\algrestore{ouralg}
\Function{Final$\_$Verification($COL,PEL,ROL,Q$)}{}
\State $PEL\, =\, SubTree(PEL)$
%
\While{$COL \not=\emptyset$}
	\For {each point $o$ in $COL$}
		\For{each point $r$ in $ROL$}
			\State Update$\_$NN-list($o,r$); 
		\EndFor
		\For(each point $p$ in $PEL$)
			\State Update$\_$NN-list($o,p$); 
		\EndFor
		\For(each point $c'$ in $COL -\lbrace o \rbrace$ )
			\State Update$\_$NN-list($o,c'$); 
		\EndFor				
		\If{IsHitOrDrop($o,Q$)==true}
			\State $COL=COL- \lbrace o \rbrace$;
		\EndIf
	\EndFor
\EndWhile
\EndFunction
\end{algorithmic}
\end{algorithm}
\end{spacing}

\begin{table}[tp]
\begin{tabular}{l}
\begin{minipage}[t]{\linewidth}
\begin{center}
\caption{Trace of our algorithm \label{Tab:our_algo}}
    \begin{tabular}{ |l| p{4cm} | l | l | l | p{2cm} |}
    \hline
    \bfseries{Steps} & \bfseries{Actions} & \bfseries{U}& \textbf{COL} & \textbf{ROL} & \textbf{PEL} \\ \hline
    1 & Dequeue Root, Enqueue $N_1$, Enqueue $N_2$ & $N_1$, $N_2$ &$\emptyset$&$\emptyset$&$\emptyset$\\ \hline
	2 & Dequeue $N_1$ & $N_2$, $P_0$, $P_1$ & $\emptyset$&$\emptyset$&$\emptyset$\\ \hline
	3 & Dequeue $N_2$ & $P_0$, $P_1$, $N_3$, $N_4$ &$\emptyset$&$\emptyset$ & $\emptyset$ \\ \hline
	4 & Dequeue $P_0$& $P_1$, $N_3$, $N_4$ &$\emptyset$&$P_0$& $\emptyset$\\ \hline
	5 & Dequeue $P_1$& $N_3$, $N_4$ &$\emptyset$&$P_0$, $P_1$& $\emptyset$\\ \hline
	6 & Dequeue $N_3$& $N_4$, $P_2$, $P_3$ &$\emptyset$&$P_0$, $P_1$& $\emptyset$\\ \hline
	7 & Dequeue $N_4$& $P_2$, $P_3$ &$\emptyset$&$P_0$, $P_1$& $N_4$\\ \hline
    8 & Dequeue $P_2$& $P_3$ &$P_2$&$P_0$, $P_1$& $N_4$\\ \hline
    9 & Dequeue $P_3$& $\emptyset$ &$P_2$&$P_0$, $P_1$& $N_4$, $P_3$\\ \hline
    10 & Verify $P_2$& $\emptyset$ &$\emptyset$&$P_0$, $P_1$& $N_4$, $P_3$, $P_2$\\ \hline
    \end{tabular}
\end{center}
\end{minipage}
 \end{tabular}
\end{table}

\subsection{Proof of Correctness}\label{section:proof}
We will now give a formal proof of correctness of our algorithm. Essentially, we
will show that, when an index node is checked (line 18) if it can be immediately accepted
or pruned (using Theorem \ref{thm:accept-prune-condition}),
its NN-lists (especially, upper NN-list) are complete (hence, valid).

First, we want to discuss a few observations. The first fact is, if at
any point of time, a
data point $e$ not belonging to an entry $E$ is covered in $NN(E)$, then
$e$ is covered subsequently in the NN-list of $E$.
Since $e$ is covered at this instant, some ancestor $E^*$ of $e$ must be present in
the NN-list of $E$ at that instant. Observe that after an entry is added to the NN-list of $E$,
it is removed from the  NN-list of $E$ only when the NN-list of $E$ is updated
with the children of $E^*$ (lines 21,22). This ensures that $e$ is forever covered in the
NN-list of $E$.

Similarly, $e$ is covered subsequently
in the NN-lists of all (sub-)children of $E$.
At line 18 of the algorithm, if $E$ can't be added to the results or pruned,
after updating its NN-list with each entry present in $U$,its children are added
to the queue. However, each child of $E$ inherits its NN-list i.e. simply copies
its NN-list (lines 21,22). Therefore, the children of $E$ will also have $e$ in
their NN-list.


\begin{figure}[tp]
\begin{center}
\subfloat[$\text{e is in sub tree of E}$]{\includegraphics[width=0.3\linewidth]{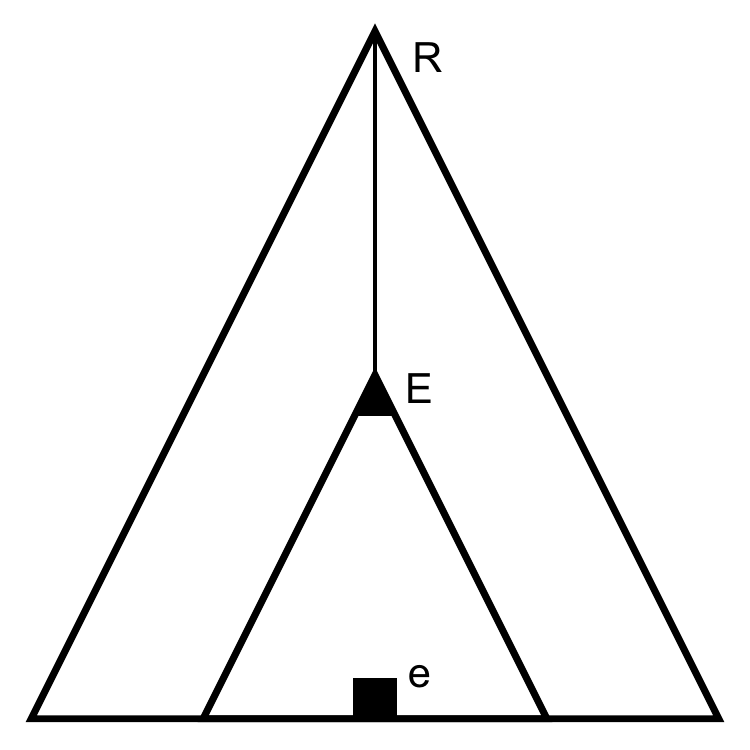}}
\subfloat[$\text{e is not in sub tree of
E}$]{\includegraphics[width=0.3\linewidth]{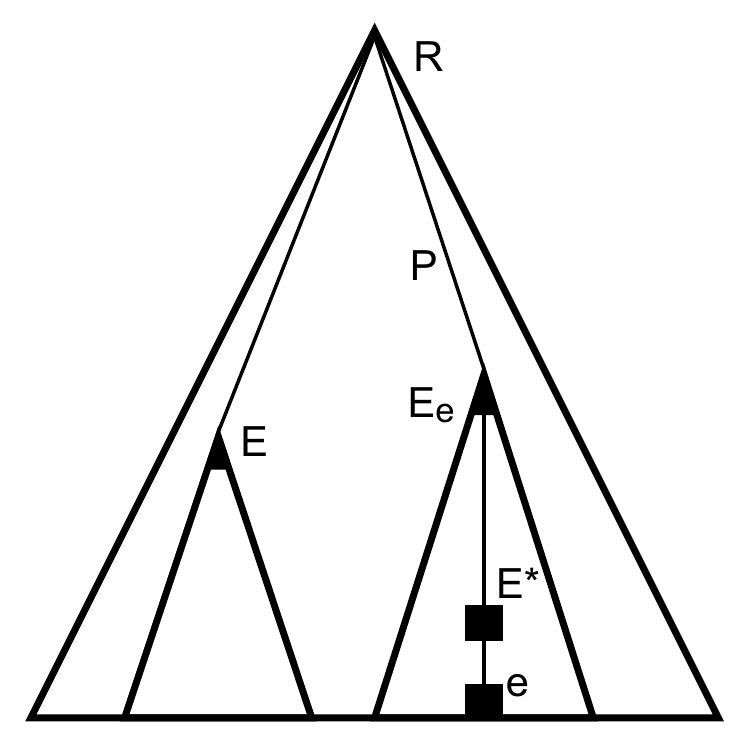}}
\caption{\small Indexing tree \label{fig:proof}}
\end{center}
\end{figure}

Now we present the key lemma for our proof of correctness.

\begin{lemma}\label{lemma:proof-3}
The upper NN-list of every entry $E$, which is dequeued from the queue, is complete after line 17 of the \rstknn algorithm.
\end{lemma}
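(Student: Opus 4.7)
The plan is to prove Lemma \ref{lemma:proof-3} by strong induction on the order in which entries are dequeued from $U$. The inductive hypothesis asserts that every previously dequeued entry had a complete upper NN-list at the end of its own line 17 processing. The base case is the root $R$, which is the first entry dequeued; at that moment $U$ is empty, so the mutual-update loop does nothing, but the locality step (line 12) inserts $(R, |R|-1)$ into $NN_U(R)$, which suffices for completeness since $R$ contains every data point.

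For the inductive step, let $E$ be the entry currently being dequeued and let $P = \text{Parent}(E)$. When $E$ was first enqueued (at the end of $P$'s processing), it inherited a copy of $NN_U(P)$, which by the hypothesis is complete. I then argue that between $E$'s enqueuing and dequeuing, no data point ever loses coverage in $NN_U(E)$. The only modifications to $NN_U(E)$ while $E$ sits in $U$ come from line 16 applied when some other entry $F$ is dequeued; by the observations stated immediately before the lemma, replacing an ancestor in $NN_U(E)$ by its newly dequeued descendants cannot drop the coverage of any data point.

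When $E$ itself is finally dequeued, lines 6--11 remove at most two entries from $NN_U(E)$: the self-tuple (if any) and $P$. It remains to show that the coverage lost by removing $P$ is restored by the subsequent steps. The subtree rooted at $P$ decomposes into $E$'s own subtree and the subtrees of $E$'s siblings, so I handle two sub-cases. Data points in $E$'s subtree are re-covered by line 12 when $E$ is internal, since Additself inserts $(E, |E|-1)$; if $E$ is a leaf then the only such point is $E$ itself, which the completeness condition is not required to cover. Data points in a sibling's subtree are re-covered by lines 13--17: for each sibling $S$, either $S$ is still in $U$ and is installed by the line-15 mutual update now, or $S$ was dequeued earlier, in which case its line 16 update already installed $S$ (or, by the preservation observation, effective descendants thereof) into $NN_U(E)$. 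Data points outside $P$'s subtree were covered in the list inherited from $P$ by nodes other than $P$, and those entries remain untouched.

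The principal obstacle is managing the interaction between the non-overlap requirement of a valid upper NN-list and the dynamic expansion of ancestors that are dequeued and replaced by their children. A careful argument is needed to show that Update\_NN-list transfers coverage from an ancestor to its children across multiple successive dequeue events, without ever leaving an intermediate moment at which some data point is uncovered. The two observations preceding the lemma furnish exactly the invariant required for this bookkeeping, and once that invariant is formally maintained, the completeness of $NN_U(E)$ at line 17 follows from the case analysis above.
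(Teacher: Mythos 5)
Your proof is essentially correct, but it follows a genuinely different route from the paper's. The paper fixes an arbitrary data point $e$ and argues coverage of $e$ in $NN_U(E)$ directly, by a case analysis on where a node of the root-to-$e$ path resides at the moment $E$ is dequeued (in $ROL$, $PEL$, $U$, or with $e$ itself in $COL$), handling the subtree case via Additself and the remaining cases via mutual effect at an earlier dequeue, propagated down to $E$ through inheritance when $E$ was not yet in the queue. You instead induct on dequeue order: completeness of $NN_U(\mathrm{Parent}(E))$ is inherited when $E$ is enqueued, coverage is preserved while $E$ waits in $U$, and the only coverage lost at lines 6--11 (the parent's subtree) is repaired by Additself for $E$'s own subtree and by mutual effect with each sibling --- either now, if the sibling is still in $U$, or at the sibling's earlier dequeue, when $E$ was necessarily in $U$ because siblings are enqueued together. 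This decomposition buys a more local argument (you never have to chase a data point through ancestors of $E$ and the four residence cases), makes explicit use of the fact that all children of a node enter the queue simultaneously, and treats the leaf case of the locality step explicitly, which the paper glosses over; the paper's argument, in exchange, needs no induction hypothesis. Both your proof and the paper's rest on the same two preliminary observations about Update\_NN-list (that an entry is only ever removed from an NN-list when its coverage is taken over by its children, and that coverage passes to inherited lists), which you correctly identify as the crux and do not re-prove --- the paper does not prove them in any greater detail either, so this is not a gap relative to the paper's own standard of rigor.
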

\begin{proof}
Consider an execution of the algorithm, and suppose the current node to be
dequeued from the queue is denoted by $E$. Let $e$ be any data point and $P$
denote the path from root to $e$ in the tree.
We will prove that after line 17 of the
algorithm, $e$ belongs to $NN_U(E)$. 
There are two
possibilities (see Figure \ref{fig:proof} for reference):
\begin{description}
    \item [Case A] $e$ belongs to sub tree of $E$
    \item [Case B] $e$ does not belong to sub tree of $E$
\end{description}

Case A is trivial. If $e$ belongs to sub tree of $E$, it will be present in
$NN_U(E)$ after line 17, since any internal node adds itself to its NN-lists
(line 12).

Let us now consider Case B.
Let $t_1$ be the time when $E$ is dequeued from the queue.
Now, one of these four different possibilities must be true at $t_1$.
\begin{description}
    \item[Case B.1] Some node $E_e$ on the path $P$ belongs to the result list
	$ROL$.
    \item[Case B.2] Some node $E_e$ on the path $P$ belongs to the pruned list
	$PEL$.
    \item[Case B.3] Some node $E_e$ on the path $P$ belongs to the queue $Q$.
    \item[Case B.4] $e$ belongs to the candidate list $COL$.
\end{description}
\paragraph{Case B.1}
Let $t_0$ denote the time when line 17 was encountered after $E_e$ was dequeued. Once again, there are two
possibilities.
\begin{description}
    \item[Case: $E$ belongs to the queue at $t_0$] In this case, $NN_U(E)$ will contain
	$E_e$ through mutual effect (line 16) at $t_0$. This implies
	that $e$ is covered by $NN_U(E)$ at $t_1$.
    \item [Case: $E$ does not belong to the queue at $t_0$]
If $E$ does not belong to the queue, it implies that there exists some ancestor
of $E$, say $E^*$ (cannot be $E_e$ because of condition of Case B.1) which
belongs to the queue at time $t_0$. Then $NN_U(E^*)$ contains $E_e$ through
mutual effect (line 16). This implies that once
$NN_U(E^*)$ contains $e$, upper NN-lists of all its discendant nodes will also contain
$e$.
\end{description}
The proof for Case B.2 and Case B.3 is similar to Case B.1.

We now consider the remaining Case B.4. Since $e \in$ COL, it implies that
some ancestor $E^*$ of $e$ was dequeued from the queue prior to $t_1$. All the
node present in the queue then contained $E^*$ in their upper NN-list through
mutual effect. Therefore at $t_1$ $E$ contained $E^*$ in its
upper NN-list.
\end{proof}

\begin{theorem}
Given an integer $k$, a query point $Q$, and an index tree $R$, the algorithm 2 correctly returns all \rstknn points.
\end{theorem}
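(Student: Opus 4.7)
The plan is to split the theorem into two implications: every point the algorithm places in $ROL$ is a true \rknn of $Q$ (soundness), and every true \rknn of $Q$ is placed in $ROL$ (completeness). Both implications reduce, via Theorem~\ref{thm:accept-prune-condition}, to the following claim: whenever the algorithm invokes IsHitOrDrop on an entry $E$, both $NN_L(E)$ and $NN_U(E)$ are valid in the sense of Definition~\ref{lower-contri-list} and its upper counterpart. Once that invariant is in hand, each accept/prune decision made by IsHitOrDrop is provably correct, so soundness and completeness for the main loop follow by induction on the dequeue order.

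The first step is therefore establishing the loop invariant. Validity of $NN_L(E)$ requires sorted order by $MinSim$, non-overlap among its entries, and count caps $m_i \leq |E_i|$ (or $|E_i|-1$ when overlapping with $E$). These are preserved by Update\_NN-list, by Additself, and by the parent-removal step in lines 6--10, assuming the parent entry satisfied the same properties -- a straightforward structural induction along the path from the root. Validity of $NN_U(E)$ additionally demands completeness, and this is supplied precisely by Lemma~\ref{lemma:proof-3}. Inheritance at lines 21--22 then carries these valid lists to the children before they are enqueued, which keeps the invariant alive down the tree.

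The second step handles Final\_Verification. When the while loop terminates $U = \emptyset$, and by a structural partition argument every data point in $R$ lies either in $ROL$, in the subtree of some entry of $PEL$, or in $COL$; no point can have escaped all three because enqueuing the root plus inheritance and expansion of dequeued internal nodes cover every leaf. After $PEL$ is expanded to its set of leaf descendants in the first line of the verification routine, $ROL \cup PEL \cup COL$ is exactly the full set of data points. Updating each candidate $o \in COL$ against every point in $ROL$, $PEL$, and $COL \setminus \{o\}$ then yields NN-lists for $o$ that are complete at the finest (point) level, so Theorem~\ref{thm:accept-prune-condition} delivers the correct verdict for $o$ and the remaining candidates are classified soundly.

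The main obstacle is the validity invariant in the second paragraph, specifically preserving non-overlap and completeness across the parent-dequeue / Additself / inheritance interactions. When a parent $E^*$ is dequeued it inserts itself via Additself and may then propagate this stale self-reference to each child through inheritance; lines 6--10 are responsible for pruning that reference once the child is dequeued, and one must verify that after this removal, together with the subsequent Additself on the child and the mutual-effect updates against the current queue, no data point outside the subtree of $E$ fails to be covered by some entry of $NN_U(E)$. Tracking which ancestor witnesses which data point at each instant -- and combining this with the tie-breaking convention from the footnote to Theorem~\ref{thm:accept-prune-condition} -- is where I expect the detailed bookkeeping to concentrate.
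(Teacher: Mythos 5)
Your proposal follows essentially the same route as the paper: accept/prune decisions are justified by Theorem~\ref{thm:accept-prune-condition} once the NN-lists are valid, with completeness of $NN_U$ supplied by Lemma~\ref{lemma:proof-3}, and the leftover candidates are settled in Final\_Verification after $PEL$ is expanded to point level. The bookkeeping you flag as the main obstacle (covering every point outside the subtree of $E$ across the dequeue/Additself/inheritance interactions) is precisely what Lemma~\ref{lemma:proof-3} already discharges, so your plan needs nothing beyond what you sketch.
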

\begin{proof}
The correctness follows from the following observations that were made earlier.
\begin{itemize}
    \item Internal nodes are accepted or pruned (by IsHitOrDrop) only when the sufficient
conditions according to the Theorem
\ref{thm:accept-prune-condition} are met (using Lemma \ref{lemma:proof-3}).
    \item For the data points left in the candidate list $COL$, in Final$\_$Verification,
the (complete) NN-lists of every such point are updated with every other object
(present in candidate, result and pruned list), before IsHitOrDrop being called
on the point for directly accepting or pruning. Our Final$\_$Verification routine
implements this in a rather straight forward manner. In line 32 of this routine,
internal nodes present in $PEL$ are replaced with their contained points to
ensure that operations in this routine directly involve points.

\end{itemize}
\end{proof}

\section{Conclusion and Future Work}\label{sec:Conclusion and Future Work}
\rknn is an important problem in facility location, operations research,
clustering and other domains. We observed that a few published algorithms are
not fully correct. In this paper we presented a correct algorithm to compute \rknn on a
general data set organised as a tree. We first discussed counter-examples to
illustrate where the earlier algorithms made an error, and then discussed
the necessity of maintaining locality and completeness conditions for ensuring
the correctness of results. We finished by modifying one of the proposed
algorithms along with an explanation why our algorithm is correct.

In the future, we would like to extend our algorithm for performing bichromatic
\rstknn algorithm. We would further like to develop algorithms where the objects
are dynamic (e.g., moving in space, or textual attributes getting updated).

\bibliographystyle{ACM-Reference-Format-Journals}
\bibliography{todsssubmission}

\end{document}